\documentclass[]{article}

\usepackage{graphicx}
\usepackage{cancel}
\usepackage{authblk}
\usepackage{amsmath}
\usepackage{amsthm}
\usepackage{amssymb}
\newtheorem{theorem}{Theorem}
\newtheorem{corollary}{Corollary}
\usepackage[sort&compress,numbers]{natbib}
\bibliographystyle{unsrt}
\usepackage[colorlinks=true,linkcolor=black, citecolor=blue, urlcolor=blue]{hyperref}

\begin{document}

\title{Complete general solutions for equilibrium equations of isotropic strain gradient elasticity} 

\author[1,2]{Y.O. Solyaev}

\affil[1]{Institute of Applied Mechanics of Russian Academy of Sciences, Moscow, Russia}
\affil[2]{Moscow Aviation Institute, Moscow, Russia}

\setcounter{Maxaffil}{0}
\renewcommand\Affilfont{\itshape\small}

\date{\today}

\maketitle

\begin{abstract}
In this paper, we consider isotropic Mindlin-Toupin strain gradient elasticity theory in which the equilibrium equations contain two additional length-scale parameters and have the fourth order. For this theory we developed an extended form of Boussinesq-Galerkin (BG) and Papkovich-Neuber (PN) general solutions. Obtained form of BG solution allows to define the displacement field through the single vector function that obeys the eight-order bi-harmonic/bi-Helmholtz equation. The developed PN form of the solution provides an additive decomposition of the displacement field into the classical and gradient parts that are defined through the standard Papkovich stress functions and modified Helmholtz decomposition, respectively. Relations between different stress functions and completeness theorem for the derived general solutions are established. As an example, it is shown that a previously known fundamental solution within the strain gradient elasticity can be derived by using the developed PN general solution.
\end{abstract}

\section{Introduction}
\label{int}

The development and the proof of completeness of general solutions in classical elasticity were the subject of the research during more than a hundred years. The most famous general solutions are known after Boussinesq and Galerkin \cite{boussinesq1885application, galerkin1930investigation} and Papkovich and Neuber \cite{papkovich1932representation, neuber1934neuer}. 
Interrelations between these solutions and their completeness have been discussed by Mindlin \cite{mindlin1936note}, Gurtin \cite{gurtin1962helmholtz}, Noll \cite{noll1957verschiebungsfunktionen}, Sternberg and Eubanks \cite{sternberg1955concept, eubanks1956completeness}, Sokolnikoff \cite{sokolnikoff1956mathematical}, Slobodyansky \cite{Slobodyansky1959}, Wang and Wang \cite{wang1991transformation}  and others. The universal constructive scheme for the development of general solutions and evaluation of its completeness and non-uniqueness within the classical   elasticity have been established based on the matrix methods of the theory of differential operators by Lurie, Wang and others \cite{lur1937theory, wang1995completeness, wang2008recent}.

In the present study, we consider the strain gradient elasticity theory (SGET), which general formulation for isotropic materials have been developed by Mindlin \cite{Mindlin1964} and Toupin \cite{toupin1964theories}. The main feature of SGET is the assumption that the potential-energy density of the media depends on the gradient of strain in addition to strain.  
In constitutive equations of SGET there arise five additional material constants in addition to two classical Lame parameters for isotropic materials, though only two additional parameters arise in the equilibrium equations \cite{dell2009generalized}. Boundary-value problem of SGET consists of the fourth-order equilibrium equations and extended number of boundary conditions, which form can be obtained based on the variational approach \cite{Mindlin1964, toupin1964theories}. In SGET there arise an extended definition of surface traction accounting for the normal and for the curvature of the Cauchy cut as well as additional definitions for double tractions and edge tractions \cite{Mindlin1964, toupin1964theories, dell2016cauchy}. The number of equilibrium equations in SGET remain the same to classical elasticity since the number of primary field variables (components of the displacement vector) does not change.

Notably, that Mindlin-Toupin SGET contains as the special cases a number of famous simplified gradient theories \cite{askes2011gradient, polizzotto2017hierarchy, gusev2017symmetry} and also several kinds of incomplete gradient theories like the couple stress theory, the dilatation gradient elasticity, etc. \cite{mindlin1962effects, dell2009generalized, lurie2021dilatation, eremeyev2020well, eremeyev2018linear}. Nowadays, these theories attracts an increasing attention in applications to fracture mechanics and dislocations problems \cite{gourgiotis2009plane, askes2015understanding, vasiliev2021new, lazar2013fundamentals, makvandi2021strain}, in the studies of small-scale and high-frequency processes \cite{cordero2016second, rosi2018validity}, in the description of mechanical behaviour of composites and metamaterails\cite{ma2014new, lurie2011eshelby, seppecher2019pantographic, dell2020discrete, solyaev2022self}. 

First variant of general solution for equilibrium equations of SGET have been presented in the initial work by Mindlin \cite{Mindlin1964}, though the particular variant of this solution within the couple stress theory have been established earlier by Mindlin and Tiersten \cite{mindlin1962effects}. The form of Mindlin solution \cite{Mindlin1964} can be treated as the generalized variant of the classical Papkovich-Neuber (PN) solution. It defines the displacement field in rather complicated form through the vector and scalar functions that obey the fourth order governing equations. Later, the simpler variants of the PN solutions with stress functions that obey the second-order equations have been established within SGET \cite{lurie2006interphase, lurie2011eshelby, charalambopoulos2015plane, charalambopoulos2022representing, solyaev2019three,placidi2017semi}, though the completeness of these solutions have not been proven or it was implied on the basis of heuristic reasoning  \cite{solyaev2022elastic}.

Lurie et al. \cite{lurie2006interphase, lurie2011eshelby} introduced PN general solution within the simplified variant of gradient theory and represented the displacement field through the additive decomposition into the sum of classical and gradient parts. The classical part of the solution was defined similarly to the standard PN solutions through the harmonic vector and scalar functions. Gradient part of the solution was defined by using two additional vector functions that satisfied the modified Helmholtz equations. 
Solyaev et al. \cite{solyaev2019three} used similar form of PN solution within SGET and reduced the representation for the gradient part of solution to the six scalar functions for the arbitrary curvilinear coordinates, which allows the separation of variables in the Helmholtz equation. 
Recently, this representation was additionally simplified such that the gradient part of PN solution was defined through the modified variant of Helmholtz decomposition, in which the scalar and the vector potentials obey the modified Helmholtz equations with different coefficients \cite{solyaev2022elastic}. Similar result have been also established within the simplified strain gradient elasticity theory by Charalambopoulos et al. \cite{charalambopoulos2022representing}, though the possibility of the use of solenoidal vector stress function for the gradient part of the displacement field have not been considered. Representation of the displacement field in terms of Lame's potentials have been used by Placidi and Dhaba within the Saint-Venant's problems \cite{placidi2017semi}.

In the present paper we derive the general solutions for equilibrium equations of Mindlin-Toupin SGET and prove its completeness in a deductive manner that was suggested by Papkovich \cite{papkovich1932representation} and used by Gurtin and Sternberg \cite{stern1962}, Gurtin \cite{gurtin1972} within the classical elasticity. We start from the Helmholtz decomposition for the displacement field, which is valid for the arbitrary smooth vector fields in the bounded domains and which validity in the infinite domains have been proven for weakly decaying fields in the classical work by Gurtin \cite{gurtin1962helmholtz}.
Based on Helmholtz decomposition and analysis of equilibrium equations of SGET we introduce the definition of the modified Galerkin vector that appears to obey the eight-order governing equation and that allows to define the displacement field in the presence of arbitrary bulk force, i.e. we obtaine the extended variant of Boussinesq-Galerkin (BG) solution within SGET. 
Generalizing the approach of classical elasticity, we introduce then the relations between the Galerkin vector and Papkovich-Neuber stress functions. 
The resulting modified form of PN solution becomes very attractive for applications because it provides a very simple form of the general solution of Mindlin-Toupin SGET with additive decomposition into the classical and gradient parts of the displacement field. The classical part is defined in a standard PN form through the harmonic vector and harmonic function, while the gradient part is simply represented through the modified Helmholtz decomposition. In such a way we prove the correctness of the previously supposed simplified form of PN solution within SGET \cite{solyaev2022elastic}. 

Combining both results for BG and PN forms of solution we prove the theorem of their completeness within SGET in a sense of similar proofs developed by Mindlin \cite{mindlin1936note} and Gurtin \cite{gurtin1972} within classical elasticity. Considered kind of a proof can be formulated according to Truesdell as follows: \textit{"corresponding to any stress field satisfying the given equations there exists at least one suitable choice of the stress functions"} \cite{truesdell1959invariant}. Hence, to prove the completeness of BG solution we show an explicit representation for the modified Galerkin  stress function through a given displacement field that satisfies the equilibrium equation of SGET. Similarly to classical elasticity, this is done by using the analogy between SGET equilibrium equation and relation between Galerkin stress function and the displacement field and also involving established PN representation of solution within SGET. Then, based on the relations between the stress functions of PN and BG solutions we prove completeness of the former. Thus, in the presented proof it is essential to have both representation of solutions in BG and PN forms.

The key point of the presented results is the proposed modified definition of Galerkin stress function and its specific relations to the Papkovich stress functions within SGET that have not been established previously for the best of author's knowledge. Applications of the obtained results can be related to the wide class of boundary value problems that can be solved analytically in a simple manner by using PN representation for which we show the completeness \cite{charalambopoulos2015plane, charalambopoulos2020plane, solyaev2022elastic}. Incorporation of complete general solutions into the numerical schemes (like in a Trefftz method) can be also an important issue for the development of stable and flexible numerical solvers within SGET \cite{lurie2006interphase,solyaev2021trefftz}.

\section{Preliminaries}
\label{pre}

Equilibrium equations of SGET can be represented in the following form \cite{Mindlin1964}:
\begin{equation}
	\alpha (1-l_1^2 \nabla^2) \nabla \nabla \cdot \pmb u
	- (1-l_2^2 \nabla^2) \nabla \times \nabla \times \pmb u = 
	-\frac{\pmb b}{\mu}
\label{ee}
\end{equation}
where  $\pmb u(\pmb{r})$ is the vector of mechanical displacements at a point $\pmb r = \{x_1, x_2, x_3\}$; $\pmb b(\pmb{r})$ is the body-force density vector; $\alpha = (\lambda+2\mu)/\mu$ = $2(1-\nu)/(1-2\nu)$ is classical non-dimensional parameter; $\lambda$, $\mu$ are the classical Lame constants; $\nu$ is the Poisson's ratio; $l_1$ and $l_2$ are the length scale parameters of isotropic elastic material that arise in the equilibrium equations of Mindlin-Toupin strain gradient elasticity.

Note, that the length scale parameters $l_1$ and $l_2$ are differently defined through the additional gradient material constants within the so-called Mindlin Forms I, II and III, that corresponds to the formulation of SGET in terms of second gradient of displacement, strain gradients and symmetric/anti-symmetric parts of strain gradients, respectively \cite{Mindlin1964}. Nevertheless, the form of equilibrium equations \eqref{ee} will be the same in all of these variants of SGET \cite{Mindlin1964}. Equilibrium equations of different simplified gradient theories can be obtained from Eq. \eqref{ee} as the particular cases. For example, in the Aifantis theory it is valid $l_1=l_2$ \cite{askes2011gradient}, in the couple stress and in the modified couple stress theory $l_1=0$ \cite{mindlin1962effects,munch2017modified}, and in the dilatation gradient elasticity $l_2=0$ \cite{lurie2021dilatation}. Assuming that both length scale parameters equal to zero $l_1=l_2=0$ equation \eqref{ee} reduces to classical elasticity equilibrium equation. Therefore, all considerations presented below for equation \eqref{ee} will be valid for any kind of mentioned gradient theories and all presented solutions must contain the corresponding classical solutions as the particular cases.

For the purpose of the following derivations we also need to define the general solution for the fourth-order scalar (or vector) equation of the following form:
\begin{equation}
	(1-l^2 \nabla^2) \nabla^2 \mathcal F  = \mathcal B
\label{eq}
\end{equation}
where $\mathcal F$ is the scalar (or the vector) field that should be found in the bounded or in the unbounded region of three-dimensional euclidean space $D$; $\mathcal B$ is the prescribed continuous scalar (or vector) field of class $C^n$ ($n\geq 1$) defined in $D$; and we assume that $l\in\mathbb R$ so that operator $(1-l^2 \nabla^2)$ corresponds to the modified Helmholtz equation also known as the screened Poisson equation. 

\begin{theorem}\label{thm1}
General solution of equation \eqref{eq} is given by:
\begin{equation}
	\mathcal F  = \mathcal F_c + \mathcal F_g
\label{t11}
\end{equation}
where $\mathcal F_c$ and $\mathcal F_g$ are the general solutions of Poisson equation and inhomogeneous modified Helmholtz equations, respectively
\begin{equation}
	\nabla^2\mathcal F_c =\mathcal B, \qquad (1-l^2 \nabla^2)\mathcal F_g = l^2\mathcal B
\label{t12}
\end{equation}
\end{theorem}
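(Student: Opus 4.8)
The plan is to prove that the solution set of \eqref{eq} coincides exactly with the set of sums $\mathcal F_c+\mathcal F_g$ described in \eqref{t11}--\eqref{t12}, by checking the two inclusions separately. Both rest on the elementary operator identity
\begin{equation}
	(1-l^2\nabla^2)+l^2\nabla^2=\mathrm{I},
\label{bezoutid}
\end{equation}
a B\'ezout-type relation expressing that the constant-coefficient differential operators $\nabla^2$ and $(1-l^2\nabla^2)$ are coprime, together with the fact that they commute.

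For the sufficiency direction I would take arbitrary $\mathcal F_c,\mathcal F_g$ satisfying \eqref{t12} and apply $(1-l^2\nabla^2)\nabla^2$ to $\mathcal F=\mathcal F_c+\mathcal F_g$. Letting $\nabla^2$ act first, the first equation of \eqref{t12} gives $(1-l^2\nabla^2)\nabla^2\mathcal F_c=(1-l^2\nabla^2)\mathcal B$; commuting the two operators and using the second equation of \eqref{t12} gives $(1-l^2\nabla^2)\nabla^2\mathcal F_g=\nabla^2(1-l^2\nabla^2)\mathcal F_g=l^2\nabla^2\mathcal B$. Adding these and invoking \eqref{bezoutid}, the result is $(1-l^2\nabla^2)\mathcal B+l^2\nabla^2\mathcal B=\mathcal B$, so \eqref{eq} holds. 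Regularity causes no trouble here, since $\mathcal B\in C^n$ with $n\ge1$ guarantees enough smoothness of $\mathcal F_c$ and $\mathcal F_g$ for every derivative above to be meaningful.

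For the necessity (completeness) direction I would start from an arbitrary solution $\mathcal F$ of \eqref{eq} and produce the decomposition explicitly by splitting the identity \eqref{bezoutid} applied to $\mathcal F$,
\begin{equation}
	\mathcal F=(1-l^2\nabla^2)\mathcal F+l^2\nabla^2\mathcal F=:\mathcal F_c+\mathcal F_g,
\label{Fcg}
\end{equation}
and then verifying that the two pieces obey \eqref{t12}: $\nabla^2\mathcal F_c=\nabla^2(1-l^2\nabla^2)\mathcal F=(1-l^2\nabla^2)\nabla^2\mathcal F=\mathcal B$, and $(1-l^2\nabla^2)\mathcal F_g=l^2(1-l^2\nabla^2)\nabla^2\mathcal F=l^2\mathcal B$, both immediate from \eqref{eq} and commutativity. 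This establishes \eqref{t11}. I would also note in passing that for a fixed $\mathcal F$ the decomposition is in fact unique: if $\psi$ is the difference of the $\mathcal F_c$-parts of two admissible splittings, then $\nabla^2\psi=0$ and $(1-l^2\nabla^2)\psi=0$, whence $\psi=l^2\nabla^2\psi=0$.

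There is essentially no deep obstacle in this argument; the only points deserving a word of care are the standing assumption $l\neq0$ (so that $\mathcal F_g$ in \eqref{t12} is genuinely governed by the modified Helmholtz operator, the case $l=0$ collapsing \eqref{eq} to the Poisson equation trivially) and the observation that the representation is non-vacuous, i.e.\ the split equations \eqref{t12} are themselves solvable in $D$ for the given $\mathcal B$, which follows from the classical solvability theory for the Poisson and screened Poisson equations under the stated hypotheses on $\mathcal B$ and $D$. The crux is the coprimeness identity \eqref{bezoutid}, and it is precisely this operator structure that will be reused below, at the level of the Galerkin vector, to decompose the SGET displacement field into its classical and gradient parts.
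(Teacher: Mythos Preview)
Your proof is correct and rests on the same key decomposition as the paper: $\mathcal F_c=(1-l^2\nabla^2)\mathcal F$ and $\mathcal F_g=l^2\nabla^2\mathcal F$. The paper arrives at these formulas as well (they are exactly the content of its Corollary~\ref{cor1}), but it gets there by a more roundabout route---defining $\mathcal F_c$ first, then treating \eqref{t13} as an inhomogeneous Helmholtz equation for $\mathcal F$, introducing separate homogeneous and particular pieces $\bar{\mathcal F},\,\mathcal F^*,\,\mathcal F_g^*$, and only then reassembling $\mathcal F_g$. Your B\'ezout identity \eqref{bezoutid} collapses all of that into a single line and makes the mechanism transparent: the operators $\nabla^2$ and $(1-l^2\nabla^2)$ are coprime, so the identity factors. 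You also supply two things the paper leaves implicit---the sufficiency direction (that every sum $\mathcal F_c+\mathcal F_g$ actually solves \eqref{eq}) and the uniqueness of the splitting for fixed $\mathcal F$---both of which strengthen the statement. The trade-off is that the paper's longer construction, by separating homogeneous and particular parts explicitly, dovetails more visibly with the later PN derivation in Section~\ref{bpn}, where those pieces are tracked individually.
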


\begin{proof}
Let us define $\mathcal F_c$ as 
\begin{equation}
	\mathcal F_c = (1-l^2 \nabla^2)\mathcal F
\label{t13}
\end{equation}
Then from \eqref{eq}, \eqref{t13} it follows that $\mathcal F_c$ is the general solution of the Poisson's equation:
\begin{equation}
	\nabla^2\mathcal F_c =  \mathcal B
\label{t14}
\end{equation}
and we can define:
\begin{equation}
	\mathcal F_c = \bar{\mathcal F}_c + \mathcal F_c^*,
	\quad
	\nabla^2 \bar {\mathcal F}_c = 0,
	\quad
	\nabla^2 \mathcal F_c^* =  \mathcal B,
\label{t15}
\end{equation}
where $\bar{\mathcal F}_c$ is the general solution of Laplace equation and $\mathcal F^*_c$ is some particular solution of Eq. \eqref{t14}.

Then we represent the general solution of inhomogeneous Helmholtz equation \eqref{t13} in the following form:
\begin{equation}
	\mathcal F = \bar {\mathcal F} + \mathcal F^*, \quad
	(1-l^2 \nabla^2)\bar {\mathcal F} = 0, \quad
	(1-l^2 \nabla^2) \mathcal F^* = \mathcal F_c
\label{t16}
\end{equation}
where $\bar {\mathcal F}$ is the general solution of homogeneous Helmholtz equations and particular solution $\mathcal F^*$ can be defined as 
\begin{equation}
	\mathcal F^* = \mathcal F_c + \mathcal F_g^*,
\label{t167}
\end{equation}
where $\mathcal F_g^*$ is some unknown function. 

Substituting \eqref{t167} into \eqref{t16}$_3$ we find that $\mathcal F_g^*$ should be the particular solution of the following inhomogeneous Helmholtz equation:
\begin{equation}
	(1-l^2 \nabla^2) \mathcal F_g^* = l^2\mathcal B
\label{t17}
\end{equation}
where we take into account \eqref{t15}.

Then, introducing $\mathcal F_g = \bar {\mathcal F} + \mathcal F_g^*$ and using \eqref{t16}, \eqref{t17} we can identify $\mathcal F_g$ as the general solution of inhomogeneous Helmholtz equation 
\begin{equation}
	(1-l^2 \nabla^2) \mathcal F_g = l^2\mathcal B
\label{t18}
\end{equation}
and according to given definitions \eqref{t16}, \eqref{t167} we obtain $\mathcal F = \mathcal F_c + \mathcal F_g$.
\end{proof}

\begin{corollary}\label{cor1}
General solutions for equations \eqref{t12} can be presented in the following form:
\begin{equation}
F_c =(1-l^2 \nabla^2)\mathcal F, \qquad 
\mathcal F_g = l^2\nabla^2\mathcal F
\end{equation}
\end{corollary}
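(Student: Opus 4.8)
The plan is to read the two stated formulas directly off the construction used in the proof of Theorem~\ref{thm1}. First I would recall that equation~\eqref{t13} \emph{defines} $\mathcal F_c := (1-l^2\nabla^2)\mathcal F$, so the first identity $\mathcal F_c = (1-l^2\nabla^2)\mathcal F$ holds by construction and requires nothing beyond pointing back to that definition. It then only remains to express $\mathcal F_g$, which I would do by subtracting from the additive decomposition $\mathcal F = \mathcal F_c + \mathcal F_g$ obtained at the end of that proof:
\begin{equation}
\mathcal F_g = \mathcal F - \mathcal F_c = \mathcal F - (1-l^2\nabla^2)\mathcal F = l^2\nabla^2\mathcal F .
\end{equation}

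As a consistency check I would verify that these explicit expressions satisfy the defining equations~\eqref{t12}. Applying $\nabla^2$ to $\mathcal F_c = (1-l^2\nabla^2)\mathcal F$ and using~\eqref{eq} gives $\nabla^2\mathcal F_c = (1-l^2\nabla^2)\nabla^2\mathcal F = \mathcal B$; applying $(1-l^2\nabla^2)$ to $\mathcal F_g = l^2\nabla^2\mathcal F$ and again using~\eqref{eq} gives $(1-l^2\nabla^2)\mathcal F_g = l^2(1-l^2\nabla^2)\nabla^2\mathcal F = l^2\mathcal B$, exactly as required.

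I do not expect a genuine obstacle here, since the corollary is essentially a restatement of intermediate steps of the preceding proof. The one point that deserves a remark is well-definedness: the splitting $\mathcal F = \mathcal F_c + \mathcal F_g$ into a Poisson-type part and a screened-Poisson-type part is in fact unique, because any field $h$ that is simultaneously harmonic and a solution of the homogeneous equation $(1-l^2\nabla^2)h = 0$ must vanish — indeed $(1-l^2\nabla^2)h = h - l^2\nabla^2 h = h$, so $h = 0$, and this holds on bounded or unbounded $D$ alike without invoking boundary conditions. Consequently the representatives $\mathcal F_c$ and $\mathcal F_g$ are forced, and the formulas displayed above are the only admissible ones.
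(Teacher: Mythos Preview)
Your proposal is correct and follows exactly the paper's own argument, which consists of the single sentence ``The proof follows from definitions \eqref{t11} and \eqref{t13}''; your first paragraph is precisely that computation spelled out. The consistency check and the uniqueness remark you add are sound extras, but they go beyond what the paper provides.
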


\begin{proof}
The proof follows from definitions \eqref{t11} and \eqref{t13}.
\end{proof}

Corollary \ref{cor1} can be treated as the completeness theorem for representation of general solution \eqref{t11} to equation \eqref{eq} since it shows how to define the parts of solution $\mathcal F_c$ and $\mathcal F_g$ for given arbitrary field $\mathcal F$ that satisfies \eqref{t11}.
For the further analysis we will also need the explicit form of particular solutions of equations \eqref{t12}, that can be represented as follows \cite{jeffreys1956methods}:
\begin{equation}
	\mathcal F_c^*(\pmb r) = (\mathcal N \ast \mathcal B) (\pmb r) = 
	-\frac{1}{4\pi}
	\int\limits_D 
	\frac{\mathcal B(\pmb \xi)}
				{\lvert\pmb r - \pmb \xi\rvert} 
			dv_{\pmb\xi}
\label{psp}
\end{equation}
\begin{equation}
	\mathcal F_g^*(\pmb r) = (\mathcal H \ast \mathcal B) (\pmb r)= 
		\frac{1}{4\pi}\int\limits_D 
			\frac{
			e^{-\lvert\pmb r - \pmb \xi\rvert/l}
			\,\mathcal B(\pmb \xi)}
				{\lvert\pmb r - \pmb \xi\rvert} dv_{\pmb\xi}
\label{psh}
\end{equation}
where $\mathcal N(\pmb r,\pmb\xi)$ is the Newtonian potential and $\mathcal H(\pmb r,\pmb\xi)$ is the Green's function of inhomogeneous modified Helmholtz equation \eqref{t12}$_2$.\\

\textit{Notations}. 

In the following derivations we will use bold symbols for vectors and italic symbols for scalars. General solutions of homogeneous  equations will be denoted with bar symbols ($\bar\psi$, $\bar{\pmb\Psi}$). Particular solutions will be denoted with star superscripts ($\psi^*$, $\pmb\Psi^*$). Potentials (stress functions) that satisfy the Laplace or the Poisson equations will be denoted with subscript "c" -- "classic" ($\psi_c$, $\pmb B_c$). Potentials that satisfy the modified Helmholtz equation will be denoted with subscript "g" -- "gradient" ($\psi_g$, $\pmb B_g$).

\section{Boussinesq-Galerkin solution}
\label{bsg}

According to Helmholtz theorem every suitably regular vector field $\pmb u(\pmb r)$ admits the representation:
\begin{equation}
	\pmb u = \nabla \phi + \nabla\times\pmb S
\label{ht}
\end{equation}
where $\phi$ is the scalar potential and $\pmb S$ is the vector potential for which we can assume that $\nabla\cdot\pmb S=0$ without loss of generality.

For the further analysis it will be enough to assume that in the infinite domains representation \eqref{ht} can be introduced under assumptions of weak decay conditions for the displacement field $\pmb u$ that are used in classical elasticity \cite{gurtin1962helmholtz}, though the validity of Helmholtz theorem for the fields with sub-linear growth have been also established \cite{petrascheck2015helmholtz,petrascheck2017helmholtz}.

Substituting \eqref{ht} into \eqref{ee} and using standard vector calculus identities we obtain equilibrium equations in terms of potentials:
\begin{equation}
	\nabla^2 \left(
	\alpha (1-l_1^2 \nabla^2) \nabla\phi
	+ (1-l_2^2 \nabla^2) \nabla \times \pmb S
	\right) = -\frac{\pmb b}{\mu}
\label{eefs}
\end{equation}

Equation \eqref{eefs} can be reduced to the high-order equation with respect to the single vector function $\pmb W(\pmb r)$ by using the following definitions of potentials $\phi$ and $\pmb S$:
\begin{equation}
\begin{aligned}
	\phi &= \frac{1}{\alpha}(1-l_2^2 \nabla^2)\nabla\cdot \pmb W\\
	\pmb S &= -(1-l_1^2 \nabla^2)\nabla\times \pmb W
\label{fsw}
\end{aligned}
\end{equation}

Combining \eqref{eefs}, \eqref{fsw} we obtain:
\begin{equation}
\begin{aligned}
	(1-l_1^2 \nabla^2)(1-l_2^2 \nabla^2)\nabla^2 \nabla^2 \pmb W 
		= -\frac{\pmb b}{\mu}
\label{eebg}
\end{aligned}
\end{equation}

Using \eqref{ht}, \eqref{fsw} we found:
\begin{equation}
	\pmb u = \frac{1}{\alpha}(1-l_2^2 \nabla^2)\nabla\nabla\cdot \pmb W
			-(1-l_1^2 \nabla^2)\nabla\times\nabla\times \pmb W
\label{bg}
\end{equation}
or alternatively:
\begin{equation}
	\pmb u = (1-l_1^2 \nabla^2)\nabla^2\pmb W - \kappa(1-l_3^2 \nabla^2)\nabla\nabla\cdot \pmb W
\label{bg1}
\end{equation}
where $\kappa = \frac{\alpha-1}{\alpha} = \frac{1}{2(1-\nu)}$ and $l^2_3 = \frac{\alpha}{1-\alpha}\left(\frac{1}{\alpha} l_2^2 - l_1^2\right)$ (it can be shown that $l_3\in\mathbb R$ that is the consequence of definitions of $l_1$, $l_2$ and requirements for the positive definition of strain energy density in isotropic SGET, see \cite{dell2009generalized}).

Representation \eqref{bg1} (or \eqref{bg}) should be treated as Boussinesq-Galerkin (BG) solution generalized for SGET. Vector fuction $\pmb W$ is Galerkin stress function that obey the eight-order bi-harmonic/bi-Helmholtz  governing equation \eqref{eebg}. In absence of gradient effects ($l_1=l_2=0$) this representation reduces to classical BG solution. The key idea in the presented form of BG solution is the appropriate choice of definitions for the Helmholtz potentials \eqref{fsw}.

Notably, that bi-Helmholtz/bi-Laplace equation have been considered previously within the second strain gradient elasticity theory (accounting for the dependence of strain energy on strain, gradient of strain and second gradient of strain) where it was shown that similar eight-order equation defines the fundamental solution for the modified Airy stress function within the edge dislocations problems \cite{lazar2019non, lazar2006dislocations}. 

\section{Papkovich-Neuber solution}
\label{bpn}

Let us introduce the following vector function:
\begin{equation}
	\pmb B = (1-l_1^2 \nabla^2)\nabla^2\pmb W
\label{bw}
\end{equation}

From equilibrium equations written in terms of Galerkin stress function \eqref{eebg}, we found that $\pmb B$ has to satisfy
\begin{equation}
	(1-l_2^2 \nabla^2)\nabla^2\pmb B = -\frac{\pmb b}{\mu}
\label{eeb}
\end{equation}

By making use of Theorem \ref{thm1}, general solution of equation \eqref{eeb} can be constructed as:
\begin{equation}
	\pmb B = \pmb B_c + \pmb B_g
\label{bcg}
\end{equation}
where
\begin{equation}
	\nabla^2\pmb B_c = -\frac{\pmb b}{\mu}, \qquad
	(1-l_2^2\nabla^2)\pmb B_g = -l_2^2\frac{\pmb b}{\mu}
\label{bth}
\end{equation}

and according to Corollary \ref{cor1} we also have
\begin{equation}
	\pmb B_c = (1-l_2^2\nabla^2)\pmb B, \qquad
	\pmb B_g = l_2^2\nabla^2\pmb B
\label{bcbg}
\end{equation}

Then, let us define the general solution of equation \eqref{bw}:
\begin{equation}
	\pmb W = \bar {\pmb W} + \pmb W^*
\label{wgen}
\end{equation}
where $\bar {\pmb W}$ is the general solution of corresponding homogeneous equations and $\pmb W^*$ is appropriate particular solution, i.e.:
\begin{equation}
	(1-l_1^2 \nabla^2)\nabla^2 \bar {\pmb W} = 0, \qquad
	(1-l_1^2 \nabla^2)\nabla^2 \pmb W^* = \pmb B
\label{wgwp}
\end{equation}

Substituting \eqref{wgen} into BG solution for the displacement field \eqref{bg1} and taking into account \eqref{wgwp} we obtain:
\begin{equation}
\label{uww}
\begin{aligned}
	\pmb u &= (1-l_1^2 \nabla^2)(\nabla^2\bar {\pmb W} + \nabla^2\pmb W^*) - \kappa(1-l_3^2 \nabla^2)\nabla
				(\nabla\cdot \bar {\pmb W} + \nabla\cdot \pmb W^*)\\
			&= \pmb B - \kappa\nabla(\bar \varphi + \varphi^*)
\end{aligned}
\end{equation}
where we introduce new scalar potentials:
\begin{equation}
\label{pp}
\begin{aligned}
	\bar \varphi = (1-l_3^2 \nabla^2)(\nabla\cdot \bar {\pmb W}), \qquad
	\varphi^* = (1-l_3^2 \nabla^2)(\nabla\cdot \pmb W^*)
\end{aligned}
\end{equation}

Using \eqref{wgwp}$_1$, \eqref{pp}$_1$ we immediately find that potential $\bar \varphi$ obeys the following homogeneous equation:
\begin{equation}
\label{phi}
\begin{aligned}
	(1-l_1^2 \nabla^2)\nabla^2\bar \varphi = 0
\end{aligned}
\end{equation}
which general solution can be found based on Theorem \ref{thm1}.

In order to derive the representation for potential $\varphi^*$ let us consider the relation, that follows from \eqref{wgwp}$_2$, \eqref{pp}$_2$:
\begin{equation}
\label{psi}
\begin{aligned}
	(1-l_1^2 \nabla^2)\nabla^2\varphi^*
	&= (1-l_1^2 \nabla^2)(1-l_3^2 \nabla^2)\nabla^2(\nabla\cdot \pmb W^*)\\
	&= (1-l_3^2 \nabla^2)(\nabla\cdot \pmb B)
\end{aligned}
\end{equation}

From this relation it is seen that potential $\varphi^*$ should be treated as the particular solution of equation \eqref{psi} since the general solution of corresponding homogeneous equation is given by $\bar\varphi$ \eqref{phi} and it is already included into the displacement representation \eqref{uww} (similar particular solution in classical PN solution obeys Poisson equation). Substituting relation between $l_3$ and $l_1$, $l_2$ into \eqref{psi} we can find then
\begin{equation}
\label{psi1}
\begin{aligned}
	(1-l_1^2 \nabla^2)\nabla^2\varphi^* 
	= 
	\frac{1}{\kappa}(1-l_1^2 \nabla^2)(\nabla\cdot \pmb B)
	-\frac{1}{\kappa\alpha}(1-l_2^2 \nabla^2)(\nabla\cdot \pmb B)
\end{aligned}
\end{equation}

Using representation \eqref{bcg}, \eqref{bth} for the second term in the right hand side of equation \eqref{psi1} we obtain:
\begin{equation}
\label{psi2}
\begin{aligned}
	(1-l_1^2 \nabla^2)\nabla^2\varphi^*
	&= 
	\frac{1}{\kappa}(1-l_1^2 \nabla^2)(\nabla\cdot \pmb B)
	-\frac{1}{\kappa\alpha}
		\nabla\cdot \left(\pmb B_c + l_2^2 \frac{\pmb b}{\mu} \right)
	+\frac{l_2^2}{\kappa\alpha} \frac{\nabla\cdot\pmb b}{\mu}\\
	\implies
	(1-l_1^2 \nabla^2)\nabla^2\varphi^*
	&= 
	\frac{1}{\kappa}(1-l_1^2 \nabla^2)(\nabla\cdot \pmb B)
	-\frac{1}{\kappa\alpha}\nabla\cdot \pmb B_c\\
\end{aligned}
\end{equation}
 
Then, we take into account that $\pmb  B_c$ is the general solution for Poisson equation \eqref{bth}$_2$. Therefore, without loss of generality vector $\pmb B_c$ can be replaced by 
$$(1 -l_1^2 \nabla^2) \pmb B_c - l_1^2 \frac{\pmb b}{\mu}$$

In such a way from \eqref{psi2} we obtain the final form of the governing equation for potential $\varphi^*$:

\begin{equation}
\label{psi3}
\begin{aligned}
	(1-l_1^2 \nabla^2)\nabla^2\varphi^* 
	&= 
	\frac{l_1^2}{\kappa\alpha\mu} \nabla\cdot\pmb b
	+
	(1-l_1^2 \nabla^2)
	\nabla\cdot
	\left(
		\pmb B_c 
		+\frac{1}{\kappa}\pmb B_g
	\right) 
\end{aligned}
\end{equation}
where we also take into account decomposition \eqref{bcg} and definitions for $\alpha$ and $\kappa$.

Particular solution for the obtained equation \eqref{psi3} can be decomposed into three parts:
\begin{equation}
\label{psi4}
\begin{aligned}
	\varphi^* = \varphi^*_b + \varphi_c^* + \varphi_g^*
\end{aligned}
\end{equation}
so that
\begin{equation}
\label{psi8}
	(1-l_1^2 \nabla^2)\nabla^2\varphi^*_b 
	= \frac{l_1^2}{\kappa\alpha\mu} \nabla\cdot\pmb b
\end{equation}
\begin{equation}
\label{psi6}
	(1-l_1^2 \nabla^2)\nabla^2\varphi_c^* = (1-l_1^2 \nabla^2)
	\nabla\cdot\pmb B_c 
\end{equation}
\begin{equation}
\label{psi7}
	(1-l_1^2 \nabla^2)\nabla^2\varphi_g^* = \frac{1}{\kappa}(1-l_1^2 \nabla^2)
	\nabla\cdot\pmb B_g
\end{equation}

Definition of particular solution $\varphi_c^*$ \eqref{psi6} can be reduced to the classical problem, in which similar particular solution was found for equation $\nabla^2\varphi_c^* = \nabla\cdot\pmb B_c $ in the following form \cite{papkovich1932representation, gurtin1972}: 
\begin{equation}
\label{psi9}
	\varphi_c^* = \frac{1}{2}(\pmb r\cdot \pmb B_c + \beta_c),
\end{equation}
where
\begin{equation}
\label{psi90}
	\nabla^2\pmb B_c = -\frac{\pmb b}{\mu}, \qquad
	\nabla^2\beta_c = \frac{\pmb r\cdot\pmb b}{\mu}
\end{equation}

Particular solution $\varphi_g^*$ \eqref{psi7} can be defined by using Helmhlotz decomposition for the vector potential $\pmb B_g$, which can be used without any additional restrictions since $\pmb B_g$ satisfies the vector Helmholtz equation \eqref{bth}$_2$ \cite{morse1954methods}. Thus, we define:
\begin{equation}
\label{psi10}
\begin{aligned}
	\pmb B_g = \nabla \Psi + \nabla\times \pmb\Psi,\qquad
	\nabla\cdot\pmb B_g = \nabla^2 \Psi, \qquad
	\nabla\times\pmb B_g = -\nabla^2 \pmb\Psi
\end{aligned}
\end{equation}
where function $\Psi$ and solenoidal field $\pmb\Psi$ are the scalar and vector potentials, respectively, that have to satisfy the following equations:
\begin{equation}
\label{psi11}
\begin{aligned}
	(1-l_2^2\nabla^2)\nabla^2\Psi = -l_2^2\frac{\nabla\cdot\pmb b}{\mu}, \qquad
	(1-l_2^2\nabla^2)\nabla^2\pmb\Psi = l_2^2\frac{\nabla\times\pmb b}{\mu},
\end{aligned}
\end{equation}
which are obtained based on equation \eqref{bth}$_2$ and definitions \eqref{psi10}.

Based on comparison of equation for the scalar potential $\Psi$ \eqref{psi10}$_{2}$ and equation for the particular solution $\varphi_g^*$ \eqref{psi7} we can define the last one as:

\begin{equation}
\label{psi12}
	\varphi_g^* = \frac{1}{\kappa}\Psi
\end{equation}

Then, substituting \eqref{bcg}, \eqref{psi4} into \eqref{uww} and taking into account \eqref{psi9}, \eqref{psi12} we obtain:
\begin{equation}
\label{upn0}
	\pmb u = \pmb B_c + \pmb B_g 
	- \kappa\nabla\left(
		\bar\varphi 
		+ \varphi^*_b
		+ \frac{1}{2}(\pmb r\cdot \pmb B_c + \beta_c)
		+ \frac{1}{\kappa}\Psi
	\right)
\end{equation}

In this relation we observe that the sum $\bar\varphi + \varphi^*_b$ is the general solution of inhomogeneous equation \eqref{psi8}, that will be denoted in the following as $\varphi = \bar\varphi + \varphi^*_b$. Substituting Helmholtz decomposition for $\pmb B_g$ \eqref{psi10} into \eqref{upn0} we also find that the potential part of this field $\nabla\Psi$ is cancelled and definition of the displacement field becomes to:
\begin{equation}
\label{upn1}
	\pmb u = \pmb B_c 
	- \frac{\kappa}{2}\nabla(\pmb r\cdot \pmb B_c + \beta_c)
	+ \nabla\times \pmb\Psi
	- \kappa\nabla \varphi
\end{equation}
in which $\pmb B_c$ and $\beta_c$ should be treated as standard harmonic scalar and vector stress functions of PN solution defined by \eqref{psi90}; $\pmb\Psi$ and $\varphi$ are the additional stress functions of gradient theory; $\pmb\Psi$ is the general solutions of equation \eqref{psi11}$_2$ and $\varphi$ is the general solution of equation:

\begin{equation}
\label{scp}
	(1-l_1^2 \nabla^2)\nabla^2\varphi 
	= \frac{l_1^2}{\kappa\alpha\mu} \nabla\cdot\pmb b, 
\end{equation}

Obtained representation \eqref{upn1} can be additionally simplified. Namely, according to Theorem \ref{thm1} we can define:
\begin{equation}
\label{scp1}
\begin{aligned}
	\varphi &= \tilde\varphi_c + \varphi_g, \qquad
	\nabla^2\tilde\varphi_c = \frac{l_1^2}{\kappa\alpha\mu} \nabla\cdot\pmb b,
	\qquad
	(1-l_1^2\nabla^2)\varphi_g = 
		\frac{l_1^4}{\kappa\alpha\mu} \nabla\cdot\pmb b 
\end{aligned}
\end{equation}
so that the classical harmonic part of this solution $\tilde\varphi_c$ can be combined with corresponding standard Papkovich stress function $\beta_c$ in \eqref{upn1}. It is convenient to introduce then the following single scalar function in relation \eqref{upn1}:
\begin{equation}
\label{phic}
\varphi_c = \beta_c + 2\tilde\varphi_c
\end{equation}
 that should obey the Poisson equation with the right hand side defining by corresponding linear combination of the right hand sides of equations \eqref{scp1}$_2$ and \eqref{psi90}$_2$. 

In representation \eqref{upn1} we can also replace term $\nabla\times\pmb\Psi$ by the initial vector potential $\pmb B_g$ (see \eqref{psi10}) with additional requirement that this vector field should be solenoidal. As the result, after some simplifications and renormalization for the potential $\varphi_g$, one can obtain the following final form of Papkovich-Neuber general solution within SGET:
\begin{equation}
\label{pn}
\begin{aligned}
	\pmb u &= \pmb u_c + \pmb u_g,\\
	\pmb u_c &= \pmb B_c 
	- \frac{1}{4(1-\nu)}\nabla(\pmb r\cdot \pmb B_c + \varphi_c)\\
	\pmb u_g &= \pmb B_g + l_1^2\nabla \varphi_g
\end{aligned}
\end{equation}
in which the stress functions $\pmb B_c$, $\pmb B_g$, $\varphi_c$, $\varphi_g$ have to satisfy:
\begin{equation}
\label{pnpot}
\begin{aligned}
	\nabla^2\pmb B_c &= -\frac{\pmb b}{\mu}, \\
	\nabla^2\varphi_c &= 
			\frac{\pmb r\cdot\pmb b}{\mu} 
			+ 2l_1^2(1-2\nu)\frac{\nabla\cdot\pmb b}{\mu} \\
	(1-l_2^2\nabla^2)\pmb B_g &= -l_2^2\frac{\pmb b}{\mu}, \qquad
	\nabla\cdot\pmb B_g = 0, \\
	(1-l_1^2\nabla^2)\varphi_g &= -l_1^2\,\frac{1-2\nu}{2(1-\nu)}\, 
					\frac{\nabla\cdot\pmb b}{\mu}
\end{aligned}
\end{equation}

Obtained representation \eqref{pn} validates the possibility of additive decomposition of general solution for the displacement field within Mindlin-Toupin SGET into the so-called classical $\pmb u_c$ and gradient $\pmb u_g$ parts. Such representation have been heuristically assumed in Refs. \cite{solyaev2019three, solyaev2022elastic} and  and it was explicitly established previously only within the simplified gradient theories \cite{askes2011gradient,lurie2006interphase, charalambopoulos2022representing}. Similar decomposition have been also obtained within the analysis of fundamental solutions of different gradient theories \cite{gao2009green, gourgiotis2018concentrated, ma2018inclusion}. 

In the derived form of PN solution \eqref{pn} the classical part of the displacement field $\pmb u_c$ is defined through the standard Papkovich stress functions (with the only modification of the body force in the Poisson equation for the scalar function). The gradient part of the displacement field $\pmb u_g$ is represented via the modified Helmholtz decomposition that is the linear combination of the potential part $\nabla \varphi_g$ and solenoidal part $\pmb B_g$ defined as the solutions of modified Helmoltz equations with different length scale parameters (see \eqref{pnpot}). In this modified decomposition the length scale parameter $l_1$ defines the potential part of $\pmb u_g$, while $l_2$ defines its rotational part that is in agreement with the initial structure of equilibrium equations of SGET \eqref{ee}. 

General solutions for different simplified gradient theories can be obtained assuming corresponding values for the length scale parameters. Namely, assuming $l_1=0$ we obtain the general solution for the couple stress theory \cite{Mindlin1964}. For the case $l_2=0$ we obtain the general solution for the dilatation gradient elasticity theory \cite{lurie2021dilatation}. For the simplified Aifantis theory it should be used $l_1=l_2$. Classical PN solution follows from \eqref{pn}, \eqref{pnpot} if $l_1=l_2=0$.

\section{Completeness theorem}
\label{comp}

The main next step is to prove the completeness of the developed solutions. Reformulating the similar classical theorem given by Gurtin \cite{gurtin1972} we state:

\begin{theorem}\label{thm2}
Let $\pmb u$ be a displacement field that satisfies equilibrium equation \eqref{ee} and corresponds to the body force $\pmb b$. Then there exists a field $\pmb W$ that satisfies \eqref{eebg}, \eqref{bg1}; and fields $\pmb B_c$, $\pmb B_g$, $\varphi_c$, $\varphi_g$ that satisfy \eqref{pn}, \eqref{pnpot}.
\end{theorem}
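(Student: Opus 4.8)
The plan is to prove the theorem \emph{deductively}, in the sense of Truesdell: given an arbitrary $\pmb u$ solving \eqref{ee}, we construct explicit stress functions and verify they satisfy the required equations. The construction falls naturally into two halves — first the Galerkin function $\pmb W$, then the Papkovich-Neuber functions — and the earlier results (Theorem~\ref{thm1}, Corollary~\ref{cor1}, and the particular solutions \eqref{psp}, \eqref{psh}) supply all the building blocks. I would begin by reversing the derivation of Section~\ref{bsg}. Apply the Helmholtz decomposition \eqref{ht} to the given $\pmb u$ to get potentials $\phi$ and $\pmb S$ (with $\nabla\cdot\pmb S=0$), which exist under the stated decay/regularity assumptions by Gurtin's theorem \cite{gurtin1962helmholtz}. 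The substitution of \eqref{ht} into \eqref{ee} shows $\phi$, $\pmb S$ satisfy \eqref{eefs}. Now one must invert the definitions \eqref{fsw}: I would \emph{define} $\pmb W$ as a solution of the eighth-order equation \eqref{eebg} with the given body force $\pmb b$, using the explicit particular solutions from Theorem~\ref{thm1} iterated (i.e., $\pmb W = \mathcal N\ast\mathcal N\ast\mathcal H_{l_1}\ast\mathcal H_{l_2}\ast(-\pmb b/\mu)$ plus homogeneous terms) and then show that the potentials recovered from this $\pmb W$ via \eqref{fsw} differ from the actual $\phi,\pmb S$ only by terms that drop out of \eqref{bg1}. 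Concretely, the difference potentials satisfy homogeneous equations of the type \eqref{eefs} with zero right side, and the freedom in the homogeneous part $\bar{\pmb W}$ of \eqref{wgen} can be used to absorb them; this is exactly the classical Boussinesq-Galerkin completeness argument (Mindlin \cite{mindlin1936note}) carried through the extra Helmholtz factors. That this absorption is always possible — that the map $\pmb W\mapsto(\phi,\pmb S)$ is onto the relevant solution space — is the step I expect to be the main obstacle, since it requires checking that no compatibility obstruction arises from the higher-order operators; here the factorization $(1-l_1^2\nabla^2)(1-l_2^2\nabla^2)\nabla^2\nabla^2$ into commuting factors, each individually invertible on the function classes in play, is what makes it go through.

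Once $\pmb W$ satisfying \eqref{eebg} and \eqref{bg1} is in hand, the Papkovich-Neuber half is essentially a rereading of Section~\ref{bpn}. I would set $\pmb B := (1-l_1^2\nabla^2)\nabla^2\pmb W$ as in \eqref{bw}; then \eqref{eeb} holds automatically, and Theorem~\ref{thm1} with Corollary~\ref{cor1} gives the split $\pmb B=\pmb B_c+\pmb B_g$ with $\pmb B_c=(1-l_2^2\nabla^2)\pmb B$ and $\pmb B_g=l_2^2\nabla^2\pmb B$ satisfying \eqref{bth}. The solenoidality $\nabla\cdot\pmb B_g=0$ required in \eqref{pnpot} is arranged by the Helmholtz decomposition \eqref{psi10} of $\pmb B_g$: replace $\pmb B_g$ by its solenoidal part $\nabla\times\pmb\Psi$ and push the gradient part $\nabla\Psi$ into the scalar potential, exactly as was done in passing from \eqref{upn0} to \eqref{upn1}. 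Then $\varphi$ is defined as the particular-plus-homogeneous solution of \eqref{scp} built from $\nabla\cdot\pmb W^*$ via \eqref{pp}, and $\varphi_c$, $\varphi_g$ are read off from the split \eqref{scp1} together with the classical Papkovich auxiliary function $\beta_c$ through \eqref{phic}. The computation \eqref{psi}--\eqref{psi3}, run in this direction, shows these functions satisfy precisely the governing equations \eqref{pnpot}, and substituting everything back reproduces \eqref{pn}. The only genuine check here is bookkeeping: that the renormalizations of $\varphi_g$ and the body-force terms match the coefficients in \eqref{pnpot}, which is routine given the identities $\kappa=1/(2(1-\nu))$ and $l_3^2=\tfrac{\alpha}{1-\alpha}(\tfrac1\alpha l_2^2-l_1^2)$ already recorded.

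Finally I would note, as the excerpt's introduction emphasizes, that having \emph{both} the BG and PN constructions is what makes the argument close: the existence of $\pmb W$ is what supplies the $\pmb B$ from which $\pmb B_c,\pmb B_g$ are extracted, while the PN manipulation is what disposes of the extra homogeneous freedom and yields the clean additive form. So the proof is: (i) Helmholtz-decompose $\pmb u$; (ii) solve \eqref{eebg} for $\pmb W$ and show \eqref{bg1} is recovered up to removable homogeneous terms; (iii) define $\pmb B$ by \eqref{bw}, split via Theorem~\ref{thm1}; (iv) Helmholtz-decompose $\pmb B_g$ to enforce solenoidality and define $\varphi,\varphi_c,\varphi_g$; (v) verify \eqref{pnpot} by the identities of Section~\ref{bpn} and conclude \eqref{pn}. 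The delicate point throughout, and the one deserving the most care, is step (ii) — ensuring the inversion of \eqref{fsw} is surjective onto the space of admissible potential pairs, which is the SGET analogue of the classical lemma that every biharmonic displacement field admits a Galerkin vector.
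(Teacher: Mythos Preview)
Your overall architecture --- construct $\pmb W$ first, then extract the Papkovich--Neuber functions from it --- is right, and your treatment of the PN half (your steps (iii)--(v)) matches the paper's closely. But step (ii), the construction of $\pmb W$, contains a genuine gap, and the paper closes it by a device you have missed.

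You propose to define $\pmb W$ as a solution of \eqref{eebg} built from the given body force $\pmb b$, and then to adjust the homogeneous part $\bar{\pmb W}$ so that \eqref{bg1} returns the given $\pmb u$. The trouble is that solving \eqref{eebg} for the given $\pmb b$ only guarantees that the displacement field produced by \eqref{bg1} satisfies the equilibrium equation \eqref{ee}; it need not equal the \emph{particular} $\pmb u$ you started with. You must then show that the discrepancy --- a solution of the homogeneous version of \eqref{ee} --- can itself be written in the form \eqref{bg1} for some homogeneous $\bar{\pmb W}$. But that is exactly the completeness statement being proved (specialized to $\pmb b=0$), so the argument is circular as it stands. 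Your justification that ``the factorization into commuting factors, each individually invertible'' saves the day is not correct: $\nabla^2$ has all harmonic functions in its kernel, so it is not invertible on any of the relevant function classes, and one cannot simply peel off the factors of the eighth-order operator.

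The paper's key observation is that the BG relation \eqref{bg} (equivalently \eqref{bg1}), read as an equation for $\pmb W$ with $\pmb u$ prescribed, has \emph{exactly} the structure of the SGET equilibrium equation \eqref{ee} itself --- with $\alpha$ replaced by $\hat\alpha=1/\alpha$, the length scales swapped ($\hat l_1=l_2$, $\hat l_2=l_1$), and ``body force'' $\hat{\pmb b}=-\mu\pmb u$. Hence finding a $\pmb W$ that satisfies \eqref{bg1} for the given $\pmb u$ reduces to writing down a particular solution of an SGET equilibrium equation with known right-hand side, and the PN representation \eqref{pn}, \eqref{pnpot} (applied with the hatted parameters and the Green's functions \eqref{psp}, \eqref{psh}) supplies that particular solution explicitly. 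No Helmholtz decomposition of $\pmb u$ and no homogeneous adjustment are needed; \eqref{bg1} holds by construction, and \eqref{eebg} then follows from \eqref{ee}. This duality is the SGET analogue of the classical Mindlin--Gurtin trick, and it is precisely what makes the argument close: BG completeness borrows the PN particular solution, and PN completeness is then read off from $\pmb W$ via the relations of Section~\ref{bpn}, as you correctly outline.
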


\begin{proof} Consider the definition for the displacement field through the modified Galerkin stress function within SGET \eqref{bg}:
			
\begin{equation}
\label{t21}
	\pmb u = \frac{1}{\alpha}(1-l_2^2 \nabla^2)\nabla\nabla\cdot \pmb W
			-(1-l_1^2 \nabla^2)\nabla\times\nabla\times \pmb W
\end{equation}

This relation can be rewritten in the following equivalent form:

\begin{equation}
\label{t22}
\begin{aligned}
	\hat\alpha(1-\hat l_1^{\,2} \nabla^2)\nabla\nabla\cdot \pmb W
			-(1-\hat l_2^{\,2} \nabla^2)\nabla\times\nabla\times \pmb W 
			= -\frac{\hat {\pmb b}}{\mu}
\end{aligned}
\end{equation}
where $\hat\alpha = 1/\alpha$, $\hat{\pmb b} = -\mu\pmb u$, $\hat l_1 = l_2$, $\hat l_2 = l_1$.

The form of equation \eqref{t22} with respect to $\pmb W$ is exactly the same to the equilibrium equation of SGET that is defined with respect to $\pmb u$ \eqref{ee}. Therefore, the task of finding a field $\pmb W$ that satisfies \eqref{t21} is reduced to finding a particular solution to the equilibrium equation of SGET corresponding to given body forces $\hat {\pmb b}$. This can be done by using the derived form of Papkovich-Neuber solution of SGET \eqref{pn}, \eqref{pnpot}:
\begin{equation}
\label{t23}
\begin{aligned}
	\pmb W = \hat {\pmb B}_c 
	- \frac{1}{4(1-\hat\nu)}\nabla(\pmb r\cdot \hat {\pmb B}_c + \hat\varphi_c) 
	+ \hat {\pmb B}_g + \hat l_1^{\,2}\nabla \hat \varphi_g
\end{aligned}
\end{equation}
in which the stress functions can be defined based on \eqref{pnpot} and \eqref{psp}, \eqref{psh}:
\begin{equation}
\begin{aligned}
	\hat {\pmb B}_c &= \mathcal N \ast \pmb u, \qquad
	\hat\varphi_c = -\mathcal N \ast (\pmb r\cdot \pmb u 
			+ 2\hat l_1^{\,2}(1-2\hat\nu) \nabla\cdot\pmb u), \\
	\hat {\pmb B}_g &= \nabla\times 
					(\hat{\pmb\Psi}_c + \hat{\pmb\Psi}_g),
	\quad 
	\hat{\pmb\Psi}_c = -\hat l^{\,2}_2\, \mathcal N \ast (\nabla\times\pmb u),
	\quad
	\hat{\pmb\Psi}_g = -\hat l^{\,2}_2\, \mathcal H_2 \ast (\nabla\times\pmb u) \\
	\hat\varphi_g &= \hat l_1^{\,2}\,\frac{1-2\hat\nu}{2(1-\hat\nu)}
			\mathcal H_1 \ast \left(\nabla\cdot\pmb u \right)
\label{t24}
\end{aligned}
\end{equation}
where $\mathcal H_1$ and $\mathcal H_2$ are the Green's functions of Helmholtz equations \eqref{psh} defined with the length scale parameters $l_1$ and $l_2$, respectively; and for the solenoidal gradient potential we use its representation $\hat{\pmb B}_g =\nabla\times\hat{\pmb\Psi}$ in which the vector field $\hat{\pmb\Psi}$ is defined based on equation \eqref{psi11}$_2$ and Theorem \ref{thm1}.

Thus, we prove that there exist a vector field $\pmb W$ that corresponds to a given displacement field $\pmb u$. The governing equation for $\pmb W$\eqref{eebg} follows from equilibrium equation \eqref{ee} by its definition and consequently the completeness of BG solution \eqref{eebg}, \eqref{bg1} within SGET is proven. 

Consider then the relations between the Papkovich-Neuber and the Galerkin stress functions. By using relations \eqref{bw}, \eqref{bcbg} and taking into account that $\nabla\cdot\pmb B_g=0$, we  obtain the definitions for the vector stress functions of PN solution \eqref{pn}:

\begin{equation}
\begin{aligned}
	\pmb B_c &= (1-l_1^2 \nabla^2)(1-l_2^2\nabla^2)\nabla^2\pmb W, \\
	\pmb B_g &= -l_2^2\, (1-l_1^2 \nabla^2)\nabla^2(\nabla\times\nabla\times\pmb W)\\
\label{tbgbc}
\end{aligned}
\end{equation}

Scalar potentials $\varphi_c$ and $\varphi_g$ in PN solution \eqref{pn} can be defined based on relations \eqref{scp}-\eqref{phic} and Corollary \ref{cor1} as follows:

\begin{equation}
\begin{aligned}
	\varphi_c &= \beta_c + 2\tilde\varphi_c \\
	\varphi_g &= l_1^2\nabla^2\varphi\\
\label{fcfg1}
\end{aligned}
\end{equation}
where $\beta_c$ is indeterminate harmonic function and $\tilde\varphi_c$ has to satisfy
\begin{equation}
\begin{aligned}
	\tilde\varphi_c = (1-l_1^2\nabla^2)\varphi
\label{fcfg2}
\end{aligned}
\end{equation}

Scalar potential $\varphi$ was defined in Section 4 based on the following relation:
\begin{equation}
\begin{aligned}
	\varphi = \bar\varphi + \varphi^*_b
\label{fcfg3}
\end{aligned}
\end{equation}
in which according to \eqref{wgwp}$_1$:
\begin{equation}
\begin{aligned}
	\bar \varphi = (1-l_3^2 \nabla^2)(\nabla\cdot \bar {\pmb W})
\label{fcfg4}
\end{aligned}
\end{equation}
and according to \eqref{psi4}, \eqref{psi9}, \eqref{psi12}:
\begin{equation}
\begin{aligned}
	\varphi^*_b  = \varphi^* - \psi_c^* - \psi_g^*
= (1-l_3^2 \nabla^2)(\nabla\cdot \pmb W^*) - \frac{1}{2}(\pmb r\cdot \pmb B_c + \beta_c) - \frac{1}{\kappa}\Psi
\label{fcfg5}
\end{aligned}
\end{equation}

By its definition, $\pmb B_g$ \eqref{pnpot} does not have the potential part so that in \eqref{fcfg5} we can set $\Psi=0$ (see \eqref{psi10}). Then, substituting \eqref{fcfg4}, \eqref{fcfg5} into \eqref{fcfg3} and taking into account \eqref{wgen} we obtain:
\begin{equation}
\begin{aligned}
	\varphi = (1-l_3^2 \nabla^2)(\nabla\cdot \pmb W) 
			- \frac{1}{2}(\pmb r\cdot \pmb B_c + \beta_c)
\label{fcfg6}
\end{aligned}
\end{equation}

Using definition for $\tilde\varphi_c$ \eqref{fcfg2}, obtained relation  \eqref{fcfg6} and definition of particular solution $\varphi_c^*$ \eqref{psi6}, \eqref{psi9} we find that:
\begin{equation}
\begin{aligned}
	\tilde\varphi_c = 
		(1-l_1^2\nabla^2)(1-l_3^2 \nabla^2)(\nabla\cdot \pmb W) 
		- \frac{1}{2}(\pmb r\cdot \pmb B_c + \beta_c) 
		+ l_1^2\nabla\cdot\pmb B
\label{fcfg7}
\end{aligned}
\end{equation}

Finally, classical scalar PN stress functions $\varphi_c$ can be defined by using \eqref{fcfg1}$_1$, \eqref{fcfg7} as follows:
\begin{equation}
\begin{aligned}
	\varphi_c = 
2(1-l_1^2\nabla^2)(1-l_3^2 \nabla^2)(\nabla\cdot \pmb W) 
- \pmb r\cdot \pmb B_c 
+ 2l_1^2\nabla\cdot\pmb B_c
\label{fcfg8}
\end{aligned}
\end{equation}

Gradient scalar PN stress function $\varphi_g$ can be defined based on relations \eqref{fcfg1}$_2$, \eqref{fcfg6}:
\begin{equation}
\begin{aligned}
		\varphi_g = 
		l_1^2\nabla^2(1-l_3^2 \nabla^2)(\nabla\cdot \pmb W) 
		- l_1^2\nabla\cdot\pmb B_c
\label{fcfg9}
\end{aligned}
\end{equation}

Derived relations \eqref{tbgbc}, \eqref{fcfg8}, \eqref{fcfg9} allow us to define PN stress functions $\pmb B_c$, $\pmb B_g$, $\varphi_c$, $\varphi_g$ by using given Galerkin stress vector $\pmb W$, which representation for a given arbitrary displacement field $\pmb u$ have been already found \eqref{t24}. Then, according to derivations presented in previous section, equation \eqref{eebg} implies \eqref{pnpot}, and \eqref{bg1} implies \eqref{pn}. Therefore, PN solution \eqref{pn}, \eqref{pnpot} is also complete within SGET.
\end{proof}

\section{Fundamental solution}
\label{fund}
Derivation of the fundamental solution based on PN general solution within the classical elasticity can be found, e.g. in \cite{ 
eubanks1956completeness, gurtin1972}. Within SGET similar result can be obtained by using developed form of PN solution \eqref{pn}, \eqref{pnpot}. Thus, let us consider the concentrated body force $\pmb Q$ applied at the origin of coordinate system:
\begin{equation}
\label{bdel}
	\pmb b = \pmb Q \delta(\pmb r)
\end{equation}
where $\delta(\pmb r)$ is Dirac delta function.

Based on \eqref{pn} the fundamental (Kelvin) solution of SGET that corresponds to the body force \eqref{bdel} can be presented as follows:
\begin{equation}
\label{pnf}
\begin{aligned}
	\pmb u = \pmb B^*_c 
	- \frac{1}{4(1-\nu)}\nabla(\pmb r\cdot \pmb B^*_c + \varphi^*_c)
	+ \pmb B^*_g + l_1^2\nabla \varphi^*_g
\end{aligned}
\end{equation}
where the stress functions $\pmb B^*_c$, $\pmb B^*_g$, $\varphi^*_c$, $\varphi^*_g$ are the particular solutions of the following equations
\begin{equation}
\label{pnf1}
	\nabla^2\pmb B^*_c = -\frac{\pmb Q}{\mu}\delta(\pmb r),
\end{equation}
\begin{equation}
\label{pnf2}
	\nabla^2\varphi^*_c = \frac{\pmb r\cdot\pmb Q}{\mu}\delta(\pmb r)
						+ 2l_1^2(1-2\nu)
							\frac{\nabla\cdot(\pmb Q \delta(\pmb r))}{\mu},
\end{equation}
\begin{equation}
\label{pnf3}
	(1-l_2^2\nabla^2)\pmb B^*_g = -l_2^2\frac{\pmb Q}{\mu}\delta(\pmb r), \qquad
	\nabla\cdot\pmb B^*_g = 0, 
\end{equation}
\begin{equation}
\label{pnf4}
	(1-l_1^2\nabla^2)\varphi^*_g = - l_1^2\,\frac{1-2\nu}{2(1-\nu)}\, 
					\frac{\nabla\cdot(\pmb Q \delta(\pmb r))}{\mu}
\end{equation}

The remaining task is to found the particular solutions of equations \eqref{pnf1}-\eqref{pnf4}. Solution for vector stress function $\pmb B^*_c$\eqref{pnf1} is similar to classical elasticity and according to potential theory it is given by \cite{gurtin1972}:
\begin{equation}
\label{bcsol}
	\pmb B^*_c = \frac{\pmb Q}{4\pi\mu r},
\end{equation}
where $r = \lvert\pmb r\rvert$.

Scalar stress function $\varphi^*_c$ in classical elasticity vanishes since it is valid that:
\begin{equation}
\label{phicsol0}
	\nabla^2\varphi^*_c = \frac{\pmb r\cdot\pmb Q}{\mu}\delta(\pmb r)
	\quad \implies \quad \varphi^*_c\equiv 0
\end{equation}
therefore, equation \eqref{pnf2} can be defined in a simpler form as 
\begin{equation}
\label{pnf22}
	\nabla^2\varphi^*_c = 2l_1^2(1-2\nu)
							\frac{\nabla\cdot(\pmb Q \delta(\pmb r))}{\mu},
\end{equation}
Particular solution to this equation can be presented in the following form:
\begin{equation}
\label{phicsol}
	\varphi^*_c = -l_1^2\frac{1-2\nu}{2\pi\mu}\,\pmb Q\cdot\nabla
					\left(\frac{1}{r}\right)
\end{equation}
which follows from formula for the particular solutions of Poisson equation \eqref{psh} and the identity that can be proven based on divergence theorem:
\begin{equation}
\label{iden}
	\int\limits_D f(\pmb r-\pmb \xi)\nabla_{\pmb\xi}\cdot(\pmb Q \delta(\pmb \xi))dv_{\pmb\xi} = 
			\pmb Q\cdot\nabla_{\pmb r} \,f(\pmb r)
\end{equation}
where $f(\pmb r-\pmb \xi)$ is an arbitrary function; $\pmb \xi$  is integration variable; $\pmb r$ is radial distance.

Solution for the gradient vector stress function $\pmb B^*_g$ \eqref{pnf3} can be found in several steps. At first, we neglect the requirement that $\pmb B^*_g$ is the divergence-free field and find the particular solution $\tilde{\pmb B}^*_g$ to the equation:
\begin{equation}
\label{bgsol0}
	(1-l_2^2\nabla^2)\tilde{\pmb B}^*_g = -l_2^2\frac{\pmb Q}{\mu}\delta(\pmb r)
	\quad \implies\quad
	\tilde{\pmb B}^*_g = -\frac{\pmb Q}{4\pi\mu} \frac{e^{-r/l_2}}{r}
\end{equation}

Then we should subtract the potential part of the obtained solution $\tilde{\pmb B}^*_g$ to find the field $\pmb B^*_g$, which will satisfy $\nabla\cdot\pmb B^*_g=0$. Potential part of $\tilde{\pmb B}^*_g$ can be found by using Helmholtz decomposition and representation similar to \eqref{psi10}. As the result, we obtain:

\begin{equation}
\label{bgsol1}
	\pmb B^*_g = \tilde{\pmb B}^*_g - \nabla \Psi^*
\end{equation}
in which the potential $\Psi^*$ is defined according to relation (see \eqref{psi10})
\begin{equation}
\label{psibg}
\begin{aligned}
	\nabla\cdot\tilde{\pmb B}^*_g = \nabla^2 \Psi^*
\end{aligned}
\end{equation}

Then, combining equation for $\tilde{\pmb B}^*_g$ \eqref{bgsol0} and relation \eqref{psibg} and using \eqref{psh} we obtain:
\begin{equation}
\label{psibg2}
\begin{aligned}
	&\nabla\cdot\left(l_2^2\nabla^2\tilde{\pmb B}^*_g 
		-l_2^2\frac{\pmb Q}{\mu}\delta(\pmb r)\right) = \nabla^2 \Psi^*\\
	&\implies 
	\Psi^*
	= \frac{l_2^2}{4\pi\mu}\pmb Q\cdot\nabla 
			\left(\frac{1-e^{-r/l_2}}{r}\right) 
\end{aligned}
\end{equation}

Substituting \eqref{bgsol0}, \eqref{psibg2} into \eqref{bgsol1} we find the desirable solenoidal field $\pmb B^*_g$ in the following form:
\begin{equation}
\label{bgsol}
\begin{aligned}
	\pmb B^*_g = -\frac{\pmb Q}{4\pi\mu} \frac{e^{-r/l_2}}{r}
	 - \frac{l_2^2}{4\pi\mu} \pmb Q\cdot 
	 	\nabla \nabla\left(\frac{1 - e^{-r/l_2}}{r}\right)
\end{aligned}
\end{equation}

Solution for the gradient scalar stress function $\varphi^*_g$ \eqref{pnf4} is given by:
\begin{equation}
\label{phigsol}
\begin{aligned}
	&(1-l_1^2\nabla^2)\varphi^*_g = - l_1^2\,\frac{1-2\nu}{2(1-\nu)}\, 
					\frac{\nabla\cdot(\pmb Q \delta(\pmb r))}{\mu}\\
	&\implies\quad
	\varphi^*_g = - \frac{1-2\nu}{8\pi(1-\nu)\mu}\, 
					\pmb Q\cdot\nabla\left(\frac{e^{-r/l_1}}{r}\right)
\end{aligned}
\end{equation}

Substitution of relations \eqref{bcsol}, \eqref{phicsol}, \eqref{bgsol}, \eqref{phigsol} into the representation for the displacement field \eqref{pnf} provides us
\begin{equation}
\label{pnf22}
\begin{aligned}
	\pmb u &= \frac{\pmb Q}{4\pi\mu r}
	- \frac{1}{16\pi\mu(1-\nu)}\nabla
	\left(\frac{\pmb r\cdot \pmb Q}{r}\right) \\
	&+ l_1^2\frac{1-2\nu}{8\pi(1-\nu)\mu}\,\pmb Q\cdot\nabla\nabla
					\left(\frac{1}{r}\right)\\
	&- \frac{\pmb Q}{4\pi\mu} \frac{e^{-r/l_2}}{r}
	 - \frac{l_2^2}{4\pi\mu} \pmb Q\cdot 
	 	\nabla \nabla\left(\frac{1 - e^{-r/l_2}}{r}\right)\\
	 &- l_1^2 \frac{1-2\nu}{8\pi(1-\nu)\mu}\, 
					\pmb Q\cdot\nabla\nabla\left(\frac{e^{-r/l_1}}{r}\right)
\end{aligned}
\end{equation}

Finally, after some standard simplifications from \eqref{pnf22} we obtain the form of fundamental solution of SGET that exactly coincides to those one presented previously, e.g. in \cite{ma2018inclusion}:
\begin{equation}
\label{pnff}
\begin{aligned}
	\pmb u &= 
	\frac{1}{4\pi\mu}\left(
		\frac{\pmb Q}{r}
		- \frac{1}{2(1-\nu)}\pmb Q\cdot \nabla\nabla r
	\right)\\
	&+ l_1^2\frac{1-2\nu}{8\pi(1-\nu)\mu}\,\pmb Q\cdot\nabla\nabla
					\left(\frac{1-e^{-r/l_1}}{r}\right)\\
	&- \frac{\pmb Q}{4\pi\mu} \frac{e^{-r/l_2}}{r}
	 - \frac{l_2^2}{4\pi\mu} \pmb Q\cdot 
	 	\nabla \nabla\left(\frac{1 - e^{-r/l_2}}{r}\right)
\end{aligned}
\end{equation}

\section{Conclusion}
\label{con}
We derived and extended form of Boussinesq-Galerkin and Papkovich-Neuber general solution within Mindlin-Toupin strain gradient elasticity. We prove the theorem of completeness for these solutions, i.e. we show their generality \cite{truesdell1959invariant}. In the presented proof it is essential to have both forms of general solution (BG and PN) and established relations between different stress functions.  The further work should be related to the analysis of the nonuniqueness and the degree of nonuniqueness of the derived general solutions within SGET.

\section*{References}
\renewcommand{\bibsection}{}
\bibliography{refs.bib}

\end{document}